\def\func#1{\mathop{\rm #1}}
\providecommand{\U}[1]{\protect\rule{.1in}{.1in}}
\providecommand{\U}[1]{\protect\rule{.1in}{.1in}}
\newtheorem{theorem}{Theorem}
\theoremstyle{plain}
\newtheorem{lemma}{Lemma}
\theoremstyle{remark}
\newtheorem*{remark}{Remark}
\numberwithin{equation}{section}
\begin{document}
\title[Electromagnetic-Field Quantization]{On the Problem of
Electromagnetic-Field Quantization}
\author{Christian Krattenthaler}
\address{Fakult\"{a}t f\"{u}r Mathematik, Universit\"{a}t Wien,
Nordbergstrasze 15, A-1090 Vienna, Austria}
\urladdr{http://www.mat.univie.ac.at/\lower0.5ex\hbox{\~{}}kratt}
\author{Sergey I. Kryuchkov}
\address{School of Mathematical and Statistical Sciences, Arizona State
University, Tempe, AZ 85287--1804, U.S.A.}
\email{sergeykryuchkov@yahoo.com}
\author{Alex Mahalov}
\address{School of Mathematical and Statistical Sciences, Arizona State
University, Tempe, AZ 85287--1804, U.S.A.}
\email{mahalov@asu.edu}
\author{Sergei K. Suslov}
\address{School of Mathematical and Statistical Sciences, Arizona State
University, Tempe, AZ 85287--1804, U.S.A.}
\email{sks@asu.edu}
\urladdr{http://hahn.la.asu.edu/\symbol{126}suslov/index.html}
\date{April 9, 2013}
\subjclass[2010]{Primary 81Q05, 35C05. Secondary 42A38}
\keywords{Generalized harmonic oscillators, time-dependent Schr\"{o}dinger
equation, Heisenberg equations of motion, dynamic invariants, radiation
field operators, Bogoliubov transformation, quantization in randomly varying media, Berry's phase,
uncertainty relation, minimum-uncertainty squeezed states.}

\begin{abstract}
We consider the radiation field operators in a cavity with varying
dielectric medium in terms of solutions of Heisenberg's equations of motion for the
most general
one-dimensional quadratic Hamiltonian. Explicit solutions of these equations
are obtained and applications to the radiation field quantization, including
randomly varying media, are briefly discussed.
\end{abstract}

\maketitle

\section{Canonical Quantization}

Radiation field quantization
in the vacuum was introduced in original works of Born, Heisenberg and
Jordan \cite{BornJordan1925}, \cite{BornHeisenbergJordan1925} on
matrix mechanics (see also
books on quantum electrodynamics \cite{Akh:Ber}, \cite{Ber:Lif:Pit}, \cite%
{Bia:Bia75} and quantum optics \cite{KlauderSudarshan}, \cite{Louisell73},
\cite{Schleich01}, \cite{Scully:Zubairy97}, \cite{WallsMilburn08}). A modern
mathematical approach to quantization of mechanical systems is discussed in
detail, for example, in \cite{BerezinShubin}, \cite{Faddeyev69}, \cite{Fadd:Yakub}, \cite%
{Takhtajan}, \cite{Teschl}, and/or \cite{WeinbergQM}
(see also \cite{Klauder12} and the references therein).
For a classical Hamiltonian system one
replaces canonically conjugate coordinates and momenta by time-dependent
operators $q_{\lambda }(t)$ and $p_{\lambda }(t)$ that satisfy the
commutation rules%
\begin{equation}
\left[ q_{\lambda }(t),\,q_{\mu }(t)\right] =\left[ p_{\lambda }(t),\,p_{\mu
}(t)\right] =0,\qquad \left[ q_{\lambda }(t),\,p_{\mu }(t)\right] =i\hbar
\delta _{\lambda \mu }.  \label{CommutatorsPQ}
\end{equation}%
The time-evolution is determined by the Heisenberg equations of motion \cite%
{HeisenbergQM}:%
\begin{equation}
\frac{d}{dt}p_{\lambda }(t)=\frac{i}{\hbar }\left[ p_{\lambda }(t),\,%
\mathcal{H}\right] ,\qquad \frac{d}{dt}q_{\lambda }(t)=\frac{i}{\hbar }\left[
q_{\lambda }(t),\,\mathcal{H}\right] ,  \label{HeisenbergEquationsPQ}
\end{equation}%
with appropriate initial conditions.\footnote{%
The standard form of Heisenberg's equations can be obtained by the time
reversal $t\rightarrow -t.$}

Traditionally, the electromagnetic-field quantization is considered under
the assumption that the field occupies an empty box \cite{Akh:Ber}, \cite%
{Ber:Lif:Pit}, \cite{Dod:Klim:Nik93}, \cite{Fey:Hib}. The quantization of
the field in a uniform dielectric medium was discussed in \cite{Glauber91},
\cite{Hillery84}, \cite{HutterBarnet92}, \cite{JauchWatson48}, \cite%
{JauchWatson48a}, \cite{Knolletal87} (see also \cite{Dod:Klim:Nik93}, \cite%
{Drummomd90}, \cite{SuttorpWubs04} and the references therein). Yet the
problem of electromagnetic-field quantization in time-dependent nonuniform
linear nondispersive media remains an active research topic up to now \cite%
{BeiLiu11}, \cite{Bialynicka-Birula87}, \cite{Choi04}, \cite{Choi10}, \cite%
{Dod:Klim:Nik93}, \cite{GrunerWelsch96}, \cite{Hillery09}, \cite{Horsley12},
\cite{Lenac03}, \cite{LobMos91a}, \cite%
{Manetal10}, \cite{Pedrosa11}, \cite{Pedrosa11Conf}, \cite{Philbin10}, \cite%
{Raabel07}, \cite{SuttorpWubs04}, \cite{Vasylyevetal10}.

In the present letter, we study the radiation field operators in a varying
dispersive medium, which is mathematically described by the most general
phenomenological quadratic Hamiltonian $\mathcal{H}$ in an abstract Hilbert
space. (We concentrate on a single photon cavity mode, say $\upsilon ,$ with
frequency $\omega _{\upsilon }=1$ and use the units $c=\hbar =1.$ From now
on, we shall usually omit the indices when dealing with the single mode
under consideration.) In particular, our approach gives a natural
description of squeezed photons that can be created as a result of
parametric amplification of quantum fluctuations in the dynamic Casimir
effect \cite{Dodonov09}, \cite{Dodonov10}, \cite{Lahetal11}, \cite{Man'ko91}%
, \cite{Naylor12}, \cite{Schuetzholdetal98}, \cite{Wilsonetal11} and/or by similar dynamical
amplification mechanisms including the Unruh effect \cite{Unruh76} and
Hawking radiation \cite{BirrelDavies82}, \cite{Hawking74}, \cite{Hawking75}.
It is also useful for quantum fields propagating in nonstationary external
potentials \cite{Bialynicka-Birula87}, \cite{LobMos93}, \cite{Vasylyevetal10}%
, \cite{Schuetzholdetal98}, and for photon quantization in randomly varying media.

\section{Solution of Heisenberg Equations for Nonautonomous Quadratic Systems%
}

Our main result is the following.

\begin{theorem}
The solution of the Heisenberg equations of motion %
\eqref{HeisenbergEquationsPQ} for the nonautonomous quadratic Hamiltonian%
\begin{equation}
H=a(t)p^{2}+b(t)x^{2}+c(t)xp-id(t)-f(t)x-g(t)p  \label{Hamiltonian}
\end{equation}%
($a,$ $b,$ $c,$ $d,$ $f,$ and $g$ are suitable real-valued functions of time
only) has the form%
\begin{equation}
p_{\lambda }(t)=\frac{\widehat{b}(t)-\widehat{b}^{\dagger }(t)}{i\sqrt{2}}%
,\qquad q_{\lambda }(t)=\frac{\widehat{b}(t)+\widehat{b}^{\dagger }(t)}{%
\sqrt{2}}.  \label{pqQED}
\end{equation}%
Here, the time-dependent annihilation $\widehat{b}(t)$ and creation $%
\widehat{b}^{\dagger }(t)$ operators are given by the Ansatz%
\begin{align}
\widehat{b}(t)&=\frac{e^{-2i\gamma \left( t\right) }}{\sqrt{2}}\left( \beta
\left( t\right) x+\varepsilon \left( t\right) +i\frac{p-2\alpha \left(
t\right) x-\delta \left( t\right) }{\beta \left( t\right) }\right) ,  \notag
\\
\widehat{b}^{\dagger }(t)&=\frac{e^{2i\gamma \left( t\right) }}{\sqrt{2}}%
\left( \beta \left( t\right) x+\varepsilon \left( t\right) -i\frac{p-2\alpha
\left( t\right) x-\delta \left( t\right) }{\beta \left( t\right) }\right)
\label{aacross(t)QED}
\end{align}%
in terms of solutions of the Ermakov-type system%
\begin{equation}
\frac{d\alpha }{dt}+b+2c\alpha +4a\alpha ^{2}=a\beta ^{4},  \label{SysA}
\end{equation}%
\begin{equation}
\frac{d\beta }{dt}+\left( c+4a\alpha \right) \beta =0,  \label{SysB}
\end{equation}%
\begin{equation}
\frac{d\gamma }{dt}+a\beta ^{2}=0  \label{SysC}
\end{equation}%
and%
\begin{equation}
\frac{d\delta }{dt}+\left( c+4a\alpha \right) \delta -f-2g\alpha =2a\beta
^{3}\varepsilon ,  \label{SysD}
\end{equation}%
\begin{equation}
\frac{d\varepsilon }{dt}-\left( g-2a\delta \right) \beta =0,  \label{SysE}
\end{equation}%
\begin{equation}
\frac{d\kappa }{dt}-g\delta +a\delta ^{2}=a\beta ^{2}\varepsilon ^{2}.
\label{SysF}
\end{equation}%
The time-independent (self-adjoint) operators $x$ and $p$ obey the canonical
commutation rule $\left[ x,p\right] =i$ in an abstract
(complex) Hilbert space which implies that the relation%
\begin{equation}
\widehat{b}(t)\widehat{b}^{\dagger }(t)-\widehat{b}^{\dagger }(t)\widehat{b}%
(t)=1  \label{commutatora(t)across(t)QED}
\end{equation}%
holds at all times.
\end{theorem}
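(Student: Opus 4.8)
The plan is to \emph{verify} the Ansatz rather than derive it: assuming that $\alpha ,\beta ,\gamma ,\delta ,\varepsilon$ solve the Ermakov-type system \eqref{SysA}--\eqref{SysE}, I would show by direct substitution that the operators in \eqref{pqQED}--\eqref{aacross(t)QED} satisfy the Heisenberg equations \eqref{HeisenbergEquationsPQ} together with the canonical commutation relation \eqref{commutatora(t)across(t)QED}. First I would make the equations of motion explicit. Using only $[x,p]=i$ one finds $[x,H]=i(2ap+cx-g)$ and $[p,H]=-i(2bx+cp-f)$, while the c-number $-id(t)$ drops out since it commutes with everything. Thus \eqref{HeisenbergEquationsPQ} becomes the linear inhomogeneous system
\[
\frac{dq_{\lambda}}{dt}=-2a\,p_{\lambda}-c\,q_{\lambda}+g,\qquad \frac{dp_{\lambda}}{dt}=2b\,q_{\lambda}+c\,p_{\lambda}-f .
\]
Because $q_{\lambda}$ and $p_{\lambda}$ are real linear combinations of $\widehat{b}(t)$ and $\widehat{b}^{\dagger}(t)$ and $H$ is self-adjoint (the role of the $-id$ term is to cancel the anti-self-adjoint part of $c\,xp$), it suffices to prove the single operator identity $d\widehat{b}/dt=i[\widehat{b},H]$; the conjugate identity for $\widehat{b}^{\dagger}$ follows by the same computation with $i\mapsto -i$, and both Heisenberg equations are then immediate.

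Next I would dispose of the commutation relation, which is the easy part. Writing $\widehat{b}=\tfrac{1}{\sqrt{2}}e^{-2i\gamma}(A+iB)$ with the self-adjoint operators $A=\beta x+\varepsilon$ and $B=\beta^{-1}(p-2\alpha x-\delta)$, the conjugate phases cancel in both $\widehat{b}\,\widehat{b}^{\dagger}$ and $\widehat{b}^{\dagger}\widehat{b}$, so that $[\widehat{b},\widehat{b}^{\dagger}]=-i[A,B]$. Every bracket in $[A,B]$ vanishes except the one proportional to $[x,p]$, giving $[A,B]=[x,p]=i$ and hence \eqref{commutatora(t)across(t)QED} at every time $t$. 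This uses only $\beta(t)\neq 0$ and is independent of the differential equations.

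The substance of the theorem is the identity $d\widehat{b}/dt=i[\widehat{b},H]$. I would expand each side as an affine combination of $x$, $p$, and the identity with time-dependent scalar coefficients. On the left, the phase contributes through $\dot\gamma$, while $\dot A,\dot B$ bring in $\dot\alpha ,\dot\beta ,\dot\delta ,\dot\varepsilon$; on the right I insert the commutators computed above. Matching the coefficient of $x$, the coefficient of $p$, and the constant term separately yields the scalar identities to be checked: the coefficients of $x$ and $p$ collapse to the homogeneous equations \eqref{SysA}--\eqref{SysC} for $\alpha ,\beta ,\gamma$, and the constant term reproduces \eqref{SysD}--\eqref{SysE} for $\delta ,\varepsilon$. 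I would organize the bookkeeping by collecting the real and imaginary (operator) parts of each coefficient, so that each Ermakov equation gets used exactly once.

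The hard part will be precisely this matching, which succeeds only because of the \emph{nonlinear} right-hand sides of the Ermakov system. Differentiating the phase produces terms in $a\beta^{3}$ (via $\dot\gamma=-a\beta^{2}$), and these must cancel against the contributions of $\dot\alpha$ and $\dot\delta$; the term $a\beta^{4}$ on the right of \eqref{SysA} and the term $2a\beta^{3}\varepsilon$ on the right of \eqref{SysD} are exactly what is needed to make the powers of $\beta$ balance, leaving the clean coefficients $-2a,\,-c,\,2b,\,c$ demanded by the Heisenberg system. I note that $\kappa$ and \eqref{SysF} never enter $\widehat{b}$ and play no role here: $\kappa$ records only the overall phase (classical action) relevant to the Schr\"odinger-picture wave function. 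Finally, a convenient choice of initial data such as $\alpha(0)=0$, $\beta(0)=1$, $\gamma(0)=0$, $\delta(0)=\varepsilon(0)=0$ enforces $q_{\lambda}(0)=x$ and $p_{\lambda}(0)=p$, which is the sense of the ``appropriate initial conditions'' in \eqref{HeisenbergEquationsPQ}.
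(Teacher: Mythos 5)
Your overall strategy --- verify the Ansatz by expanding $d\widehat{b}/dt$ and $i[\,\widehat{b},H]$ over the generators $\{1,x,p\}$ and matching coefficients, with the commutation relation \eqref{commutatora(t)across(t)QED} following separately from $[x,p]=i$ --- is exactly the paper's proof (stated there as a ``direct, but somewhat tedious, calculation'' over the Heisenberg--Weyl algebra and outsourced to a \textsl{Mathematica} notebook), and your computations of $[x,H]$, $[p,H]$ and of $[\widehat{b},\widehat{b}^{\dagger}]$ are correct, as is your observation that $\kappa$ and \eqref{SysF} play no role. However, one step is genuinely false: the claim that \eqref{HeisenbergEquationsPQ} ``becomes'' the closed system $\dot q_{\lambda}=-2ap_{\lambda}-cq_{\lambda}+g$, $\dot p_{\lambda}=2bq_{\lambda}+cp_{\lambda}-f$. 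That system results from substituting $x\mapsto q_{\lambda}$, $p\mapsto p_{\lambda}$ in the commutators, i.e.\ from reading $\mathcal H$ as a function of the \emph{evolved} operators. In the theorem (and in \eqref{HeisenberEquations}) $H$ is the fixed expression \eqref{Hamiltonian} in the time-independent $x$ and $p$; since $H$ depends explicitly on $t$, the two readings are inequivalent, and the Ansatz satisfies only the first. Concretely, for $H=\tfrac12(p^{2}+x^{2})$ with Ermakov data $\alpha(0)=\gamma(0)=\delta(0)=\varepsilon(0)=0$, $\beta(0)=2$, one has $q_{\lambda}(0)=2x$, $p_{\lambda}(0)=\tfrac12p$, and $i[q_{\lambda}(0),H]=-2p\neq-2a\,p_{\lambda}(0)=-\tfrac12p$. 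The two systems agree only for pure rotations ($\beta^{2}\equiv1$, $\alpha\equiv0$), i.e.\ they disagree for every genuinely squeezed solution, which is the whole point of the theorem.

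Had you matched coefficients against your linear system the bookkeeping would not close: with $q_{\lambda}=ux+vp+\cdots$, $p_{\lambda}=\tilde ux+\tilde vp+\cdots$, equations \eqref{SysB}--\eqref{SysC} give $\dot v=-2a\beta\cos2\gamma+(c+4a\alpha)\beta^{-1}\sin2\gamma$, which equals $-2au+cv$ but equals $-2a\tilde v-cv$ only if $\beta^{2}=1$. The repair is simply to delete that paragraph and do what you in fact propose next: verify the single identity $d\widehat{b}/dt=i[\,\widehat{b},H]$ with $H$ kept in terms of $x,p$. Writing $\widehat{b}=Xx+Pp+C$, the target scalar equations are $\dot X=-cX+2bP$, $\dot P=-2aX+cP$, $\dot C=gX-fP$, and these are equivalent to \eqref{SysA}--\eqref{SysE} exactly as you describe (the $a\beta^{4}$ and $2a\beta^{3}\varepsilon$ terms cancelling the $\dot\gamma$ contributions). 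One further small point: $H$ is self-adjoint only when $d=c/2$; in general $H-H^{\dagger}$ is the c-number $i(2d-c)$, which is harmless in every commutator but does not literally ``cancel the anti-self-adjoint part,'' so the passage from the identity for $\widehat{b}$ to the one for $\widehat{b}^{\dagger}$ should be justified by conjugating the scalar coefficient equations rather than by appealing to self-adjointness of $H$.
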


\begin{proof}
These results can be verified by a direct, but somewhat tedious, calculation
when one expands the solution in generators of the Heisenberg--Weyl algebra,
namely $\left\{ 1,x,p\right\} ,$ with undetermined time-dependent complex
coefficients and simplifies the commutators. The substitutions (\ref{pqQED}%
)--(\ref{aacross(t)QED}) allow us to derive equations (\ref{SysA})--(\ref%
{SysE}), say from the first Heisenberg equation (\ref{HeisenberEquations})
below. (A \textsl{Mathematica} based proof is available on the article's
website; see notebook \textsl{HeisenbergOscillators.nb} and \cite{Kasman}
for an important program ingredient.\footnote{%
See also \textsl{Koutschan.nb} \cite{Kouchan11}.}) Equation~(\ref{SysF}),
which determines the global phase of the corresponding Fock states in the
Schr\"{o}dinger representation, does not show up in this proof, but will
appear later (see Lemma~2).
\end{proof}

By back substitution, we see that the operators $\widehat{b}(t) $ and $%
\widehat{b}^{\dagger }(t) $ are solutions of the Heisenberg equations%
\begin{equation}
\frac{d}{dt}\widehat{b}(t) =i\left[ \ \widehat{b}(t) ,\,H\right] ,\qquad
\frac{d}{dt}\widehat{b}^{\dagger }(t) =i\left[ \ \widehat{b}^{\dagger }(t)
,\,H\right] ,  \label{HeisenberEquations}
\end{equation}%
subject to the initial conditions%
\begin{align}
\widehat{b}(0) &=\frac{e^{-2i\gamma (0) }}{\sqrt{2}}\left( \beta (0)
x+\varepsilon (0) +i\frac{p-2\alpha (0) x-\delta (0) }{\beta (0) }\right) ,
\notag \\
\widehat{b}^{\dagger }(0) &=\frac{e^{2i\gamma (0) }}{\sqrt{2}}\left( \beta
(0) x+\varepsilon (0) -i\frac{p-2\alpha (0) x-\delta (0) }{\beta ( 0) }%
\right) .  \label{HeisenbergInitialData}
\end{align}

To a certain extent, the creation and annihilation operators (\ref%
{aacross(t)QED}) allow us to incorporate the Schr\"{o}dinger symmetry group
of the harmonic oscillator, originally found
in coordinate representation \cite{Niederer72}, \cite{Niederer73}, into a
more abstract Heisenberg picture. (For the sake of simplicity, we have
restricted ourselves to a single photon mode $\upsilon $ with frequency $%
\omega _{\upsilon }=1;$ see \cite{KrySusVegaMinimum} for a detailed
investigation of the special case of uniform media.)

A concept of dynamical invariants for generalized harmonic oscillators,
which is crucial for constructing the corresponding Fock states from our
creation and annihilation operators, has been recently revisited in \cite%
{Cor-Sot:Sua:SusInv}, \cite{SanSusVin}, and \cite{Suslov10} (see \cite%
{Dod:Mal:Man75}, \cite{Dodonov:Man'koFIAN87}, \cite{Lewis:Riesen69}, \cite%
{Malkin:Man'ko79}, \cite{Malk:Man:Trif73}, \cite{Schradeetal995} and the references therein for
classical accounts).

\section{Solving The Ermakov-Type System}

A general solution of (\ref{SysA})--(\ref{SysF}) is given by Lemma~3 of \cite%
{Lan:Lop:Sus} in a real form (see also \cite{Kouchan11} and \cite%
{Lop:Sus:VegaGroup}). In order to proceed to a more compact form, one needs
to recall some notation. The substitution%
\begin{equation}
\alpha =\frac{1}{4a}\frac{\mu ^{\prime }}{\mu }-\frac{d}{2a}  \label{Alpha}
\end{equation}%
reduces the inhomogeneous equation (\ref{SysA}) to the second order ordinary
differential equation%
\begin{equation}
\mu ^{\prime \prime }-\tau (t)\mu ^{\prime }+4\sigma (t)\mu
=c_{0}(2a)^{2}\beta ^{4}\mu ,  \label{CharEq}
\end{equation}%
which has the familiar time-varying coefficients%
\begin{equation}
\tau (t)=\frac{a^{\prime }}{a}-2c+4d,\qquad \sigma (t)=ab-cd+d^{2}+\frac{d}{2%
}\left( \frac{a^{\prime }}{a}-\frac{d^{\prime }}{d}\right) .
\label{TauSigma}
\end{equation}%
(In \eqref{CharEq} and in the rest of the paper, we use a formal `binary'
parameter $c_{0}=0,1$ for the sake of convenience.)

The time-dependent coefficients $\alpha _{0},$ $\beta _{0},$ $\gamma _{0},$ $%
\delta _{0},$ $\varepsilon _{0},$ $\kappa _{0},$ which satisfy the
homogeneous (Riccati-type) system (\ref{SysA})--(\ref{SysF}), are given by
(cf.\ \cite{Cor-Sot:Lop:Sua:Sus}, \cite{Lan:Lop:Sus}, \cite{Suslov10})%
\begin{align}
\alpha _{0}(t)& =\frac{1}{4a(t)}\frac{\mu _{0}^{\prime }(t)}{\mu _{0}(t)}-%
\frac{d(t)}{2a(t)},  \label{A0} \\
\beta _{0}(t)& =-\frac{\lambda (t)}{\mu _{0}(t)},\qquad \lambda (t)=\exp
\left( -\int_{0}^{t}\left( c(s)-2d(s)\right) \,ds\right) ,  \label{B0} \\
\gamma _{0}(t)& =\frac{1}{2\mu _{1}(0)}\frac{\mu _{1}(t)}{\mu _{0}(t)}+\frac{%
d(0)}{2a(0)},  \label{C0}
\end{align}%
and%
\begin{align}
\delta _{0}(t)& =\frac{\lambda (t)}{\mu _{0}(t)}\int_{0}^{t}\left[ \left(
f(s)-\frac{d(s)}{a(s)}g(s)\right) \mu _{0}(s)+\frac{g(s)}{2a(s)}\mu
_{0}^{\prime }(s)\right] \frac{ds}{\lambda (s)},  \label{D0} \\
\varepsilon _{0}(t)& =-\frac{2a(t)\lambda (t)}{\mu _{0}^{\prime }(t)}\delta
_{0}(t)+8\int_{0}^{t}\frac{a(s)\sigma (s)\lambda (s)}{\left( \mu
_{0}^{\prime }(s)\right) ^{2}}\left( \mu _{0}(s)\delta _{0}(s)\right) \,ds
\notag \\
& \kern2cm+2\int_{0}^{t}\frac{a(s)\lambda (s)}{\mu _{0}^{\prime }(s)}\left(
f(s)-\frac{d(s)}{a(s)}g(s)\right) \,ds,  \label{E0} \\
\kappa _{0}(t)& =\frac{a(t)\mu _{0}(t)}{\mu _{0}^{\prime }(t)}\delta
_{0}^{2}(t)-4\int_{0}^{t}\frac{a(s)\sigma (s)}{\left( \mu _{0}^{\prime
}(s)\right) ^{2}}\left( \mu _{0}(s)\delta _{0}(s)\right) ^{2}\,ds  \notag \\
& \kern2cm-2\int_{0}^{t}\frac{a(s)}{\mu _{0}^{\prime }(s)}\left( \mu
_{0}(s)\delta _{0}(s)\right) \left( f(s)-\frac{d(s)}{a(s)}g(s)\right) \,ds
\label{F0}
\end{align}%
($\delta _{0}(0)=-\varepsilon _{0}(0)=g(0)/\left( 2a(0)\right) $ and $\kappa
_{0}(0)=0$), provided that $\mu _{0}$ and $\mu _{1}$ are the standard
(real-valued) solutions of equation (\ref{CharEq}) when $c_{0}=0$
corresponding to the initial conditions $\mu _{0}(0)=0,$ $\mu _{0}^{\prime
}(0)=2a(0)\neq 0$ and $\mu _{1}(0)\neq 0,$ $\mu _{1}^{\prime }(0)=0.$
(Proofs of these facts are outlined in \cite{Cor-Sot:Lop:Sua:Sus} and \cite%
{Cor-Sot:SusDPO}. The integrals are treated in the
most general way, which may include stochastic calculus; see, for example,
\cite{Oksendal00}.)

Here, we would like to present a new compact form of these solutions. Let us
introduce the complex-valued function%
\begin{equation}
z(t) =\left( 2\alpha (0)+\frac{d(0)}{a(0)}\right) \mu _{0}(t)+\frac{\mu
_{1}(t)}{\mu _{1}(0)}+i\beta ^{2}(0)\mu _{0}(t)  \label{ComplexZ}
\end{equation}%
(a complex parametrization of Green's function, linear invariants,
and Wigner functions of
generalized harmonic oscillators are also discussed in \cite{Dod:Man79},
\cite{Dodonov:Man'koFIAN87}, \cite{Har:Ben-Ar:Mann11}, and \cite{Schradeetal995}). Then%
\begin{equation}
z(t) =c_{1}E(t)+c_{2}E^{\ast }(t),  \label{ComplexZE}
\end{equation}%
where the complex-valued solutions are given by%
\begin{equation}
E(t)=\frac{\mu _{1}(t)}{\mu _{1}(0)}+i\mu _{0}(t),\qquad E^{\ast }(t)=\frac{%
\mu _{1}(t)}{\mu _{1}(0)}-i\mu _{0}(t),  \label{ComplexE}
\end{equation}%
and the corresponding complex-valued parameters are defined by
\begin{equation}
c_{1}=\frac{1+\beta ^{2}(0)}{2}-i\left( \alpha (0)+\frac{d(0)}{2a(0)}\right)
,\qquad c_{2}=\frac{1-\beta ^{2}(0)}{2}+i\left( \alpha (0)+\frac{d(0)}{2a(0)}%
\right) ,  \label{ComplexC12}
\end{equation}%
with%
\begin{equation}
c_{1}+c_{2}=1,\qquad \left\vert c_{1}\right\vert ^{2}-\left\vert
c_{2}\right\vert ^{2}=c_{1}-c_{2}^{\ast }=\beta ^{2}(0).
\label{ComplexC12Relations}
\end{equation}%
In addition,%
\begin{equation}
z(0)=c_{1}+c_{2}=1,\qquad z^{\prime }(0)=2ia(0)\left( c_{1}-c_{2}\right) .
\label{ComplInitData}
\end{equation}%
When written in terms of the complex
function $z$ in \eqref{ComplexZ}, the complex conjugate functions $E$ and $%
E^{\ast }$ defined in \eqref{ComplexE} become%
\begin{equation}
E=\frac{c_{1}^{\ast }z-c_{2}z^{\ast }}{\left\vert c_{1}\right\vert
^{2}-\left\vert c_{2}\right\vert ^{2}},\qquad E^{\ast }=\frac{c_{1}z^{\ast
}-c_{2}^{\ast }z}{\left\vert c_{1}\right\vert ^{2}-\left\vert
c_{2}\right\vert ^{2}}  \label{EZ}
\end{equation}%
and%
\begin{equation}
\mu _{0}=\frac{z-z^{\ast }}{2i\left( c_{1}-c_{2}^{\ast }\right) },\qquad
\frac{\mu _{1}}{\mu _{1}(0)}=\frac{\left( c_{1}^{\ast }-c_{2}^{\ast }\right)
z+\left( c_{1}-c_{2}\right) z^{\ast }}{2\left( c_{1}-c_{2}^{\ast }\right) }.
\label{MZ}
\end{equation}

One can readily verify that%
\begin{gather}
\alpha _{0}=\frac{1}{4a}\frac{\left( z-z^{\ast }\right) ^{\prime }}{%
z-z^{\ast }}+\frac{d}{2a},\qquad \beta _{0}=-2i\lambda \frac{%
c_{1}-c_{2}^{\ast }}{z-z^{\ast }},  \notag \\
\gamma _{0}=\frac{\left( c_{1}^{\ast }-c_{2}^{\ast }\right) z+\left(
c_{1}-c_{2}\right) z^{\ast }}{2i\left( z-z^{\ast }\right) }+\frac{d(0)}{2a(0)%
},  \label{ABC0}
\end{gather}%
and equations (\ref{D0})--(\ref{F0}) can also be rewritten in terms of the
function $z$ in view of (\ref{MZ}). Finally, we introduce a second complex
function,
\begin{equation}
\zeta (t)=\varepsilon (0)\beta (0)+i\left( \delta (0)+\varepsilon _{0}\left(
t\right) \right) =c_{3}+i\varepsilon _{0},\qquad c_{3}=\varepsilon (0)\beta
(0)+i\delta (0),  \label{ComplexZeta}
\end{equation}%
and indicate the inverse relations between the essential, real and complex,
parameters:%
\begin{equation}
\alpha (0)=\frac{c_{1}^{\ast }-c_{1}}{2i}-\frac{d(0)}{2a(0)},\qquad \beta
^{2}(0)=c_{1}-c_{2}^{\ast }=\left\vert c_{1}\right\vert ^{2}-\left\vert
c_{2}\right\vert ^{2},  \label{ZAB}
\end{equation}%
and%
\begin{equation}
\delta (0)=\frac{c_{3}-c_{3}^{\ast }}{2i},\qquad \varepsilon (0)=\pm \frac{%
c_{3}+c_{3}^{\ast }}{2\sqrt{\left\vert c_{1}\right\vert ^{2}-\left\vert
c_{2}\right\vert ^{2}}}.  \label{ZDE}
\end{equation}%
Then the solution of the initial value problem for the Ermakov-type system
can be expressed in terms of the complex
function $z$ in \eqref{ComplexZ} as given in the lemma below.

\begin{lemma}
The system \eqref{SysA}--\eqref{SysF} is solved by
\begin{gather}
\alpha =\alpha _{0}+\lambda ^{2}\frac{c_{1}-c_{2}^{\ast }}{2i\left\vert
z\right\vert ^{2}}\ \frac{z+z^{\ast }}{z-z^{\ast }},\quad \beta =\pm \lambda
\frac{\sqrt{\left\vert c_{1}\right\vert ^{2}-\left\vert c_{2}\right\vert ^{2}%
}}{\left\vert z\right\vert },\quad \gamma =\gamma (0) -\frac{1}{2}\arg z,
\label{ABC} \\
\delta =\delta _{0}+\lambda \frac{\zeta z-\zeta ^{\ast }z^{\ast }}{%
2i\left\vert z\right\vert ^{2}},\qquad \varepsilon =\pm \frac{\zeta z+\zeta
^{\ast }z^{\ast }}{2\left\vert z\right\vert \sqrt{\left\vert
c_{1}\right\vert ^{2}-\left\vert c_{2}\right\vert ^{2}}},  \label{CE} \\
\kappa =\kappa (0) +\kappa _{0}+\frac{\left( \zeta ^{2}z+\left. \zeta ^{\ast
}\right. ^{2}z^{\ast }\right) \left( z-z^{\ast }\right) }{8i\left(
c_{1}-c_{2}^{\ast }\right) \left\vert z\right\vert ^{2}},  \label{F}
\end{gather}
with $z$ and $\zeta$ as in \eqref{ComplexZ} and \eqref{ComplexZeta},
respectively.
\textrm{(}The solution of the Ermakov-type equation \eqref{CharEq} is given by $%
\mu =\mu (0) \,\vert z\vert .$\textrm{)}
\end{lemma}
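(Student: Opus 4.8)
The plan is to exploit the triangular structure of \eqref{SysA}--\eqref{SysF}: the triple $(\alpha,\beta,\gamma)$ closes on itself, then $(\delta,\varepsilon)$ is driven linearly by it, and finally $\kappa$ is recovered by a single quadrature. First I would linearize the Riccati--Ermakov block. The substitution \eqref{Alpha} converts \eqref{SysA} into the characteristic equation \eqref{CharEq}; moreover \eqref{SysB} integrates to $\beta\propto\lambda/\mu$, so the term $a\beta^{4}$ on the right of \eqref{SysA} becomes a $\mu^{-3}$ nonlinearity and \eqref{CharEq} with $c_{0}=1$ is a Pinney (Ermakov) equation. Its positive solutions are precisely the moduli of the complex solutions of the associated \emph{linear} equation (namely \eqref{CharEq} with $c_{0}=0$), the normalization being fixed by the quadratic invariant. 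I would therefore verify directly that $\mu=\mu(0)\,\lvert z\rvert$, with $z=c_{1}E+c_{2}E^{\ast}$ the combination \eqref{ComplexZ}--\eqref{ComplexC12} of the standard real solutions $\mu_{0},\mu_{1}$, solves \eqref{CharEq} for $c_{0}=1$; the constant that makes this work is exactly $\lvert c_{1}\rvert^{2}-\lvert c_{2}\rvert^{2}=\beta^{2}(0)$ from \eqref{ComplexC12Relations}.

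Granting $\mu=\mu(0)\,\lvert z\rvert$, the expressions for $\alpha,\beta,\gamma$ drop out of the logarithmic derivative of $\lvert z\rvert^{2}=zz^{\ast}$. Since $\mu'/\mu=\operatorname{Re}(z'/z)$ and $\mathrm{d}(\arg z)/\mathrm{d}t=\operatorname{Im}(z'/z)$, the real part feeds the definition \eqref{Alpha} to produce $\alpha$, while the imaginary part is governed by \eqref{SysC}: because $z$ solves the linear equation, the antisymmetric combination $z'z^{\ast}-z(z^{\ast})'$ is a constant multiple of the Wronskian, and integrating \eqref{SysC} yields $\gamma=\gamma(0)-\tfrac12\arg z$. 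Comparison with the already-established complex forms \eqref{ABC0} of the homogeneous pieces $\alpha_{0},\beta_{0},\gamma_{0}$ then delivers \eqref{ABC}, the sign in $\beta$ recording the $\pm$ branch of the modulus.

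For the inhomogeneous block I would read \eqref{SysD}--\eqref{SysE} as a linear system in $(\delta,\varepsilon)$ whose homogeneous part is solved by $\beta$, and integrate it by variation of parameters. Packaging the initial data $\delta(0),\varepsilon(0)$ together with the particular response $\varepsilon_{0}$ into the second complex function $\zeta$ of \eqref{ComplexZeta}, and trading $\mu_{0},\mu_{1}$ for $z,z^{\ast}$ through the dictionary \eqref{MZ}, collapses the resulting integrals into the compact forms \eqref{CE}. Equation \eqref{F} then follows by substituting these $\delta,\varepsilon$ into \eqref{SysF} and integrating once: the combination $\zeta^{2}z+\zeta^{\ast 2}z^{\ast}$ appears because $g\delta-a\delta^{2}+a\beta^{2}\varepsilon^{2}$ is a quadratic polynomial in $\zeta,\zeta^{\ast}$, and $\kappa_{0}$ absorbs the part independent of the new data. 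The recorded values $z(0)=1$ in \eqref{ComplInitData} and $\delta_{0}(0)=-\varepsilon_{0}(0)=g(0)/(2a(0))$, $\kappa_{0}(0)=0$ after \eqref{F0} fix all constants.

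The main obstacle is not the linearization or any single quadrature but the bookkeeping that knits them together: one must consistently rewrite every real quantity built from $\mu_{0},\mu_{1}$ in the complex language of $z$ and $\zeta$ via \eqref{EZ}--\eqref{MZ} and \eqref{ZAB}--\eqref{ZDE}, repeatedly invoking the constant relations \eqref{ComplexC12Relations} and the Wronskian of \eqref{CharEq}, all while tracking the $\pm$ modulus branches so that the prescribed initial conditions are reproduced. In practice the safest confirmation is to substitute \eqref{ABC}--\eqref{F} back into \eqref{SysA}--\eqref{SysF} directly---the same tedious-but-routine check used for the Theorem---with the derivation above explaining why the compact complex form is the natural one; alternatively one may simply reduce to the real-form general solution already recorded in \cite{Lan:Lop:Sus} and perform the algebraic change of variables.
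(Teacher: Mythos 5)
Your proposal is correct, and it is more informative than what the paper actually writes: the paper's entire proof is a one-line delegation, ``a straightforward calculation using Lemma~3 in \cite{Lan:Lop:Sus}'', i.e.\ take the real-form general solution of \eqref{SysA}--\eqref{SysF} already recorded there and convert it to the complex variables $z$, $\zeta$ via \eqref{EZ}--\eqref{MZ} and \eqref{ZAB}--\eqref{ZDE} --- exactly the fallback you name in your last paragraph. Your primary route instead rederives the content of that cited lemma: the Pinney linearization ($\beta\propto\lambda/\mu$ turning the right-hand side of \eqref{CharEq} into a $\mu^{-3}$ nonlinearity, whence $\mu=\mu(0)\lvert z\rvert$), the identities $\mu'/\mu=\operatorname{Re}(z'/z)$ and $\tfrac{d}{dt}\arg z=\operatorname{Im}(z'/z)$ feeding \eqref{Alpha} and \eqref{SysC}, variation of parameters for $(\delta,\varepsilon)$, and a final quadrature for $\kappa$. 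The key quantitative checks all work out: the Wronskian-type combination $z'z^{\ast}-z(z^{\ast})'$ equals $4ia\lambda^{2}\beta^{2}(0)$ by \eqref{ComplInitData} and \eqref{TauSigma}, which is precisely what makes $\mu(0)\lvert z\rvert$ solve \eqref{CharEq} with $c_{0}=1$ under the normalization $\beta(0)\mu(0)=1$, and what produces $\gamma'=-\tfrac12(\arg z)'$. What your approach buys is a self-contained explanation of \emph{why} the compact complex form exists (the modulus/argument split of a single linear solution $z$), at the cost of the bookkeeping you correctly identify; what the paper's approach buys is brevity, at the cost of opacity and reliance on the external reference. Either way the final safeguard is the same: direct back-substitution into \eqref{SysA}--\eqref{SysF}.
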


\begin{proof}
This amounts to a straightforward calculation using
Lemma~3 in \cite{Lan:Lop:Sus}.
\end{proof}

As a consequence, one gets%
\begin{gather}
\frac{2i\left( \alpha -\alpha _{0}\right) }{\beta ^{2}}=\frac{z+z^{\ast }}{%
z-z^{\ast }},  \label{CA1} \\
i\left( \alpha -\alpha _{0}\right) +\frac{\beta ^{2}}{2}=\beta ^{2}\frac{z}{%
z-z^{\ast }},\qquad i\left( \alpha -\alpha _{0}\right) -\frac{\beta ^{2}}{2}%
=\beta ^{2}\frac{z^{\ast }}{z-z^{\ast }}  \label{CA2} \\
\varepsilon +i\frac{\delta -\delta _{0}}{\beta }=\frac{\zeta z}{\beta (0)
\left\vert z\right\vert },\qquad \varepsilon -i\frac{\delta -\delta _{0}}{%
\beta }=\frac{\zeta ^{\ast }z^{\ast }}{\beta (0) \left\vert z\right\vert },
\label{CE1} \\
\qquad \varepsilon ^{2}+\left( \frac{\delta -\delta _{0}}{\beta }\right)
^{2}=\varepsilon ^{2}(0) +\left( \frac{\delta (0) +\varepsilon _{0}}{\beta
(0) }\right) ^{2} ,  \label{CE2}
\end{gather}%
and%
\begin{equation}
\kappa =\kappa (0) +\kappa _{0}+\frac{\delta -\delta _{0}}{2\beta }%
\varepsilon -\frac{\varepsilon _{0}+\delta (0) }{2\beta (0) }\varepsilon (0)
.  \label{CKappa}
\end{equation}%
These \textquotedblleft quasi-invariants\textquotedblright\ can be useful,
for example, when making a comparison of calculations done by different
approximation methods.

Examples of explicitly integrable quadratic systems are discussed in \cite%
{Cor-Sot:Sua:SusInv}, \cite{Cor-Sot:SusDPO}, \cite{Dod:Man79}, \cite%
{Dodonov:Man'koFIAN87}, \cite{Lewis:Riesen69}, \cite{Lop:Sus}, \cite%
{Lop:Sus:VegaGroup}, \cite{LopSusVegaHarm}, \cite{Malkin:Man'ko79}, \cite{Schradeetal995},
and \cite{Yuen76} (see also the references therein).

\section{Single Mode Fock States for Nonautonomous Quadratic Hamiltonians}

The time-dependent quadratic operator (see \cite{SanSusVin})%
\begin{equation}
\widehat{E}(t)=\frac{1}{2}\left[ \frac{\left( p-2\alpha x-\delta \right) ^{2}%
}{\beta ^{2}}+\left( \beta x+\varepsilon \right) ^{2}\right] =\frac{1}{2}%
\left[ \widehat{a}(t)\widehat{a}^{\dagger }(t)+\widehat{a}^{\dagger }(t)%
\widehat{a}(t)\right] ,  \label{QIQED}
\end{equation}%
with the defining property%
\begin{equation}
i\frac{d\widehat{E}}{dt}+\widehat{E}H-H\widehat{E}=0,  \label{QDQED}
\end{equation}%
extends the standard Hamiltonian
(and/or number operator) for any given solution of the Ermakov-type system.
(The initial data play a role of integrals/constants of motion
and/or quantum numbers for the creation and annihilation operators in the
Heisenberg representation.) We use the substitution $\widehat{b}%
(t)=e^{-2i\gamma }\widehat{a}(t)$ and $\widehat{b}^{\dagger }(t)=\widehat{a}%
^{\dagger }(t)e^{2i\gamma }$ with $\left[ \widehat{a}(t),\widehat{a}%
^{\dagger }(t)\right] =1.$ The oscillator-type spectrum,%
\begin{equation}
\widehat{E}(t)\left\vert \Psi _{n}(t)\right\rangle =\left( n+\frac{1}{2}%
\right) \left\vert \Psi _{n}(t)\right\rangle ,  \label{EigenValueProblemQED}
\end{equation}%
can be obtained in a standard fashion (with the aid of modified variable
creation and annihilation operators; cf.\ \cite{Akh:Ber}, \cite%
{BerezinShubin}, \cite{Fock32-2}, \cite{Fock34-3}):%
\begin{equation}
\widehat{a}(t)\left\vert \Psi _{n}(t)\right\rangle =\sqrt{n}\ \left\vert
\Psi _{n-1}(t)\right\rangle ,\quad \widehat{a}^{\dagger }(t)\left\vert \Psi
_{n}(t)\right\rangle =\sqrt{n+1}\ \left\vert \Psi _{n+1}(t)\right\rangle .
\label{annandcratoperactionsQED}
\end{equation}%
Here and in what follows, it is convenient to use the orthogonality relation
$\left\langle \Psi _{m}(t),\Psi _{n}(t)\right\rangle =\delta _{mn}\, \lambda
^{-1}$ with $\beta (0)\mu (0)=1.$

Now we can analyze abstract Fock states in the Schr\"{o}dinger
representation (see also \cite{Fock28-2}).

\begin{lemma}
Let $\widehat{a}(t) \left\vert \Psi _{0}(t) \right\rangle =0.$ The dynamic
Fock states given by%
\begin{equation}
\left\vert \psi _{n}(t) \right\rangle =e^{i( 2n+1) \gamma +i\kappa
}\left\vert \Psi _{n}(t) \right\rangle =\frac{e^{i( 2n+1) \gamma +i\kappa }}{%
\sqrt{n!}}\left( \widehat{a}^{\dagger }(t) \right) ^{n}\left\vert \Psi
_{0}(t) \right\rangle  \label{FockState(t)}
\end{equation}%
satisfy the time-dependent Schr\"{o}dinger equation%
\begin{equation}
i\frac{d}{dt}\left\vert \psi _{n}(t) \right\rangle =H\left\vert \psi _{n}(t)
\right\rangle  \label{SchroedingerEqyation(t)}
\end{equation}%
with the general quadratic Hamiltonian \eqref{Hamiltonian} provided that
equation \eqref{SysF} for the global phase holds and $\left\langle \Phi
_{n},d\Phi _{n}/dt\right\rangle =0$ for $\Phi _{n}=\lambda ^{1/2}e^{-i\left(
\alpha x^{2}+\delta x\right) }\Psi _{n}.$
\end{lemma}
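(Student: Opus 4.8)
The plan is to treat this as a Lewis--Riesenfeld problem: use that $\widehat{E}(t)$ is a dynamical invariant to show the Schr\"odinger equation \eqref{SchroedingerEqyation(t)} reduces to a scalar condition on the phase $\theta_n(t)=(2n+1)\gamma+\kappa$, and then to evaluate that phase explicitly with the help of the Ermakov-type system. First I would note that, by \eqref{EigenValueProblemQED}, the $|\Psi_n(t)\rangle$ are eigenvectors of $\widehat{E}(t)$ with the \emph{constant} eigenvalues $n+\tfrac12$. Differentiating \eqref{EigenValueProblemQED} in $t$ and using the invariance property \eqref{QDQED} to replace $d\widehat{E}/dt$ shows that $(\partial_t+iH)|\Psi_n\rangle$ is again an eigenvector of $\widehat{E}(t)$ for the same eigenvalue; nondegeneracy of the oscillator spectrum then forces $(\partial_t+iH)|\Psi_n\rangle=c_n(t)\,|\Psi_n\rangle$ for a scalar $c_n$. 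Substituting the Ansatz \eqref{FockState(t)}, $|\psi_n\rangle=e^{i\theta_n}|\Psi_n\rangle$, the Schr\"odinger equation becomes $\dot\theta_n=ic_n$, and pairing with $|\Psi_n\rangle$ while using the normalization $\langle\Psi_m,\Psi_n\rangle=\delta_{mn}\lambda^{-1}$ yields the closed expression $\dot\theta_n=i\lambda\langle\Psi_n,\partial_t\Psi_n\rangle-\lambda\langle\Psi_n,H\Psi_n\rangle$. The remaining task is to show this equals $(2n+1)\dot\gamma+\dot\kappa$.

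Next I would pass to the coordinate representation and carry out the gauge transformation encoded in the hypothesis, writing $\Psi_n=\lambda^{-1/2}e^{i(\alpha x^2+\delta x)}\Phi_n$. Since $e^{-i(\alpha x^2+\delta x)}p\,e^{i(\alpha x^2+\delta x)}=p+2\alpha x+\delta$, this conjugation cancels the momentum shift in \eqref{aacross(t)QED} and turns $\widehat{E}$ from \eqref{QIQED} into the bare oscillator $\widehat{E}_0=\tfrac12\bigl(\beta^{-2}p^2+(\beta x+\varepsilon)^2\bigr)$, whose normalized eigenfunctions $\Phi_n$ can be chosen real (Hermite functions in $\beta x+\varepsilon$), so that $\langle\Phi_n,\Phi_n\rangle=1$ is equivalent to $\langle\Psi_n,\Psi_n\rangle=\lambda^{-1}$ and the hypothesis $\langle\Phi_n,d\Phi_n/dt\rangle=0$ holds. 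A check I would make at this stage is that $H$ fails to be self-adjoint only by the c-number $H^\dagger-H=i(2d-c)=i\,\dot\lambda/\lambda$; combined with $\langle\Phi_n,d\Phi_n/dt\rangle=0$, this makes the imaginary parts of the two terms in $\dot\theta_n$ cancel, guaranteeing a real phase, and collapses the formula to $\dot\theta_n=-\langle\Phi_n,(\widetilde H+\dot\alpha x^2+\dot\delta x)\Phi_n\rangle$, where $\widetilde H$ is the symmetric part $ap^2+bx^2+\tfrac{c}{2}(xp+px)-fx-gp$ of \eqref{Hamiltonian} with $p$ replaced by $p+2\alpha x+\delta$.

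The computational heart is then the evaluation of this expectation in the state $\Phi_n$. Expanding $\widetilde H+\dot\alpha x^2+\dot\delta x$ in the monomials $p^2,x^2,xp+px,x,p,1$, the coefficient of $x^2$ is $\dot\alpha+b+2c\alpha+4a\alpha^2$, which by \eqref{SysA} equals $a\beta^4$, and the coefficient of $x$ is $\dot\delta+(c+4a\alpha)\delta-f-2g\alpha$, which by \eqref{SysD} equals $2a\beta^3\varepsilon$. In the real oscillator eigenstate the expectations of $xp+px$ and of $p$ vanish, while $\langle p^2\rangle=\beta^2(n+\tfrac12)$, $\langle x^2\rangle=\beta^{-2}\bigl((n+\tfrac12)+\varepsilon^2\bigr)$, and $\langle x\rangle=-\varepsilon/\beta$ are the standard oscillator values; substituting gives
\[
\langle\Phi_n,(\widetilde H+\dot\alpha x^2+\dot\delta x)\Phi_n\rangle=2a\beta^2\bigl(n+\tfrac12\bigr)+\bigl(a\delta^2-g\delta-a\beta^2\varepsilon^2\bigr).
\]
Hence $\dot\theta_n=-2a\beta^2(n+\tfrac12)-(a\delta^2-g\delta-a\beta^2\varepsilon^2)$, which by \eqref{SysC} ($\dot\gamma=-a\beta^2$) and \eqref{SysF} ($\dot\kappa=g\delta-a\delta^2+a\beta^2\varepsilon^2$) is exactly $(2n+1)\dot\gamma+\dot\kappa=\tfrac{d}{dt}\bigl[(2n+1)\gamma+\kappa\bigr]$, which is what the Ansatz \eqref{FockState(t)} requires.

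I expect the main obstacle to be bookkeeping rather than conceptual: propagating the non-self-adjoint c-number of $H$ and the time-dependent weight $\lambda^{-1}$ consistently so that $\theta_n$ comes out real, and assembling the six monomial coefficients without sign errors before taking expectations. The roles of the two hypotheses are transparent in this scheme: $\langle\Phi_n,d\Phi_n/dt\rangle=0$ removes the geometric-phase contribution and isolates the dynamical phase, while \eqref{SysF} supplies precisely the constant term that fixes the global phase $\kappa$. This also clarifies the remark after the Theorem that \eqref{SysF} does not enter the verification of the Heisenberg equations but is forced to appear here.
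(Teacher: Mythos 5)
Your proposal is correct and follows essentially the same route as the paper's proof: reduce the Schr\"odinger equation, via the invariant $\widehat{E}$ and nondegeneracy of its spectrum, to the scalar phase condition $\dot\theta_n=i\lambda\langle\Psi_n,\partial_t\Psi_n\rangle-\lambda\langle\Psi_n,H\Psi_n\rangle$ with $\theta_n=(2n+1)\gamma+\kappa$, then evaluate both expectation values using the gauge transformation to $\Phi_n$, the hypothesis $\langle\Phi_n,d\Phi_n/dt\rangle=0$, and the oscillator matrix elements \eqref{MartixElementsXX2}. The only (organizational) difference is that the paper computes $\lambda\mathop{\rm Re}\langle\Psi_n,H\Psi_n\rangle$ by expanding $H$ in $\widehat{a},\widehat{a}^{\dagger}$ as in \eqref{H2a1a} and invokes the Ermakov system only in the final simplification, whereas you conjugate $H$ into the $\Phi_n$ gauge and absorb $\dot\alpha x^2+\dot\delta x$ into it first, so that \eqref{SysA} and \eqref{SysD} collapse the coefficients before the expectation is taken --- the same calculation in a cleaner order.
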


\begin{proof}
From (\ref{QDQED})--(\ref{EigenValueProblemQED}), one gets, formally,%
\begin{equation}
\widehat{E}\left( i\frac{d}{dt}\left\vert \psi _{n}(t)\right\rangle
-H\left\vert \psi _{n}(t)\right\rangle \right) =\left( n+\frac{1}{2}\right)
\left( i\frac{d}{dt}\left\vert \psi _{n}(t)\right\rangle -H\left\vert \psi
_{n}(t)\right\rangle \right) .  \label{InvariantEVP}
\end{equation}%
Therefore%
\begin{equation}
i\frac{d}{dt}\left\vert \psi _{n}(t)\right\rangle -H\left\vert \psi
_{n}(t)\right\rangle =c_{n}\left\vert \psi _{n}(t)\right\rangle ,
\label{InvariantCondition}
\end{equation}%
where $\mathop{\rm Im}c_{n}=0$ in view of the normalization condition; see
\cite{Cor-Sot:Sua:SusInv}, \cite{Fock28-2}, \cite{Lewis:Riesen69}. (We assume also that the
vacuum state is nondegenerate.)

Here,%
\begin{equation}
H=ap^{2}+bx^{2}+\frac{c}{2}\left( px+xp\right) +\frac{i}{2}\left(
c-2d\right) -fx-gp,  \label{Hpx}
\end{equation}%
and the position and linear momentum operators are given by%
\begin{align}
x& =\frac{1}{\beta }\left[ \frac{1}{\sqrt{2}}\left( \widehat{a}+\widehat{a}%
^{\dagger }\right) -\varepsilon \right] ,  \label{x2a} \\
p& =\frac{\beta }{i\sqrt{2}}\left( \widehat{a}-\widehat{a}^{\dagger }\right)
+\frac{\sqrt{2}\alpha }{\beta }\left( \widehat{a}+\widehat{a}^{\dagger
}\right) +\delta -\frac{2\alpha \varepsilon }{\beta }.  \label{p2a}
\end{align}%
In terms of creation and annihilation operators, the Hamiltonian takes the
form%
\begin{align}
H& =\left[ \frac{a}{2}\left( \frac{4\alpha ^{2}}{\beta ^{2}}-\beta
^{2}\right) +\frac{b+2c\alpha }{2\beta ^{2}}-\frac{i}{2}\left( c+4a\alpha
\right) \right] \left( \widehat{a}\right) ^{2}  \notag \\
& \kern2cm+\left[ \frac{a}{2}\left( \frac{4\alpha ^{2}}{\beta ^{2}}-\beta
^{2}\right) +\frac{b+2c\alpha }{2\beta ^{2}}+\frac{i}{2}\left( c+4a\alpha
\right) \right] \left( \widehat{a}^{\dagger }\right) ^{2}  \notag \\
& \kern2cm+\frac{1}{2}\left[ a\left( \beta ^{2}+\frac{4\alpha ^{2}}{\beta
^{2}}\right) +\frac{b+2c\alpha }{\beta ^{2}}\right] \left( \widehat{a}%
\widehat{a}^{\dagger }+\widehat{a}^{\dagger }\widehat{a}\right) +\frac{i}{2}%
\left( c-2d\right)  \notag \\
& \kern2cm+\sqrt{2}\left[ \frac{4a\alpha +c}{2\beta }\left( \delta -\frac{%
2\alpha \varepsilon }{\beta }\right) -\frac{\varepsilon }{\beta ^{2}}\left(
b+c\alpha \right) -\frac{f+2g\alpha }{2\beta }\right.  \notag \\
& \kern3.5cm+\left. \frac{i}{2}\left( \beta \left( g-2a\delta \right)
+\varepsilon \left( c+4a\alpha \right) \right) \right] \widehat{a}  \notag \\
& \kern2cm+\sqrt{2}\left[ \frac{4a\alpha +c}{2\beta }\left( \delta -\frac{%
2\alpha \varepsilon }{\beta }\right) -\frac{\varepsilon }{\beta ^{2}}\left(
b+c\alpha \right) -\frac{f+2g\alpha }{2\beta }\right.  \notag \\
& \kern3.5cm-\left. \frac{i}{2}\left( \beta \left( g-2a\delta \right)
+\varepsilon \left( c+4a\alpha \right) \right) \right] \widehat{a}^{\dagger }
\notag \\
& \kern2cm+a\left( \delta -\frac{2\alpha \varepsilon }{\beta }\right) ^{2}+%
\frac{\varepsilon }{\beta }\left( f+\frac{b\varepsilon }{\beta }\right)
-\left( \delta -\frac{2\alpha \varepsilon }{\beta }\right) \left( g+\frac{%
c\varepsilon }{\beta }\right) ,  \label{H2a1a}
\end{align}%
where we have corrected a typo in \cite{SanSusVin}. As a result, by (\ref%
{annandcratoperactionsQED})--(\ref{EigenValueProblemQED}) we have
\begin{multline}
\lambda \mathop{\rm Re}\left\langle \Psi _{n},H\Psi _{n}\right\rangle
=\left( n+\frac{1}{2}\right) \left[ a\left( \beta ^{2}+\frac{4\alpha ^{2}}{%
\beta ^{2}}\right) +\frac{b+2c\alpha }{\beta ^{2}}\right]  \label{Psi2A} \\
+a\left( \delta -\frac{2\alpha \varepsilon }{\beta }\right) ^{2}+\frac{%
\varepsilon }{\beta }\left( f+\frac{b\varepsilon }{\beta }\right) -\left(
\delta -\frac{2\alpha \varepsilon }{\beta }\right) \left( g+\frac{%
c\varepsilon }{\beta }\right)
\end{multline}%
in terms of solutions of the Ermakov-type system.

In order to complete the proof, one can repeat the evaluation of Berry's
phase \cite{Berry84} for generalized harmonic oscillators, given in \cite%
{SanSusVin} in coordinate representation, in a more abstract form. Indeed,%
\begin{equation}
\left\langle \psi _{n},H\psi _{n}\right\rangle =\left\langle \psi _{n},i%
\frac{d}{dt}\psi _{n}\right\rangle =-\lambda ^{-1}\left[ (2n+1)\frac{d\gamma
}{dt}+\frac{d\kappa }{dt}\right] +\left\langle \Psi _{n},i\frac{d}{dt}\Psi
_{n}\right\rangle =\left\langle \Psi _{n},H\Psi _{n}\right\rangle .
\label{TransPsy}
\end{equation}%
In general,%
\begin{equation}
\Psi _{n}=e^{(1/2)\int \left( c-2d\right) \ dt}e^{i\left( \alpha
x^{2}+\delta x\right) }\Phi _{n},\qquad \left\langle \Phi _{m},\Phi
_{n}\right\rangle =\delta _{mn},  \label{PHY}
\end{equation}%
and%
\begin{align}
\lambda \left\langle \Psi _{n},\frac{d\Psi _{n}}{dt}\right\rangle &
=i\left\langle \Phi _{n},\left( \frac{d\alpha }{dt}x^{2}+\frac{d\delta }{dt}%
x\right) \Phi _{n}\right\rangle +\frac{1}{2}\left( c-2d\right) +\left\langle
\Phi _{n},\frac{d\Phi _{n}}{dt}\right\rangle  \notag \\
& =i\frac{d\alpha }{dt}\lambda \left\langle \Psi _{n},x^{2}\Psi
_{n}\right\rangle +i\frac{d\delta }{dt}\lambda \left\langle \Psi _{n},x\Psi
_{n}\right\rangle +\frac{1}{2}\left( c-2d\right) ,  \label{LHS}
\end{align}%
where $\left\langle \Phi _{n},d\Phi _{n}/dt\right\rangle =0$ by our
hypothesis. Here,%
\begin{equation}
\lambda \left\langle \Psi _{n},x\Psi _{n}\right\rangle =-\varepsilon \beta
^{-1},\qquad \lambda \left\langle \Psi _{n},x^{2}\Psi _{n}\right\rangle
=\beta ^{-2}\left( \varepsilon ^{2}+n+\frac{1}{2}\right) .
\label{MartixElementsXX2}
\end{equation}%
As a result,%
\begin{equation}
\lambda \mathop{\rm Re}\left( i\left\langle \Psi _{n},\frac{d\Psi _{n}}{dt}%
\right\rangle \right) =-\beta ^{-2}\left( \varepsilon ^{2}+n+\frac{1}{2}%
\right) \frac{d\alpha }{dt}+\varepsilon \beta ^{-1}\frac{d\delta }{dt}.
\label{LHSErmakov}
\end{equation}%
Finally, in view of (\ref{InvariantCondition}), (\ref{Psi2A}) and (\ref%
{LHSErmakov}), we obtain (after some simplification) that
\begin{equation*}
c_{n}=\left\langle \Phi _{n},id\Phi _{n}/dt\right\rangle =0
\end{equation*}%
for any given solution of the Ermakov-type system (\ref{SysA})--(\ref{SysF}%
). This completes the proof.
\end{proof}

\begin{remark}
In coordinate representation, when $\Phi _{n}$ are, essentially, the
real-valued stationary orthonormal wave functions for the simple harmonic
oscillator with respect to the new variable $\xi =\beta x+\varepsilon $ (see
\cite{La:Lif}, \cite{Lan:Lop:Sus}, \cite{Ni:Su:Uv}, and \cite{SanSusVin} for
more details), the equation $\left\langle \Phi _{n},d\Phi
_{n}/dt\right\rangle =0$ is valid due to the normalization condition $%
\left\Vert \Phi _{n}\right\Vert ^{2}=1.$ In general, one gets $%
c_{n}=\left\langle \Phi _{n},id\Phi _{n}/dt\right\rangle $, and the previous
connection is associated with a transport law for line bundles in the
Hilbert space, namely, the change $d\Phi _{n}$ is orthogonal to $\Phi _{n}$
\cite{Simon83}.
\end{remark}

The last
lemma can be reformulated in terms of an analog of Berry's phase \cite%
{Berry84},%
\begin{equation}
\frac{d\theta _{n}}{dt}=\lambda \mathop{\rm Re}\left\langle \psi _{n},i\frac{%
d}{dt}\psi _{n}\right\rangle -\frac{d\varphi _{n}}{dt},  \label{ThetaPhiH}
\end{equation}%
where $\varphi _{n}(t)=-(2n+1)\gamma (t)$ is the dynamical phase \cite%
{SanSusVin} (see also \cite{Suslov12} for an example).

\begin{lemma}
The Fock states, given by \eqref{FockState(t)} in terms of solutions of the
Ermakov-type system, satisfy the Schr\"{o}dinger equation %
\eqref{SchroedingerEqyation(t)} if and only if
the two expressions {\rm{(40)}} and {\rm{(48)}} in \cite{SanSusVin}
for the derivative of the Berry phase become equivalent.
\end{lemma}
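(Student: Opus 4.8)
The plan is to reduce the asserted equivalence to the single real scalar $c_{n}$ that already governs the validity of the Schr\"odinger equation in Lemma~2. First I would recall that, because $\widehat{E}$ is a dynamical invariant in the sense of \eqref{QDQED}, the Ansatz \eqref{FockState(t)} satisfies the exact relation \eqref{InvariantCondition}, namely
\[
i\frac{d}{dt}\left\vert \psi _{n}(t)\right\rangle -H\left\vert \psi _{n}(t)\right\rangle =c_{n}\left\vert \psi _{n}(t)\right\rangle ,\qquad c_{n}=\left\langle \Phi _{n},i\,d\Phi _{n}/dt\right\rangle \in \mathbb{R}.
\]
Hence, once \eqref{SysF} is imposed, the dynamic Fock states solve \eqref{SchroedingerEqyation(t)} if and only if $c_{n}=0$; this is exactly the content of Lemma~2 and of the subsequent remark.

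Next I would identify the two expressions (40) and (48) of \cite{SanSusVin} with the two evaluations of the Berry-phase rate \eqref{ThetaPhiH}. The kinematic expression (40) is the defining value $\lambda \mathop{\rm Re}\left\langle \psi _{n},i\,d\psi _{n}/dt\right\rangle -d\varphi _{n}/dt$, which, after the substitution $\psi _{n}=e^{i(2n+1)\gamma +i\kappa }\Psi _{n}$ and use of \eqref{TransPsy}, \eqref{MartixElementsXX2} and \eqref{LHSErmakov}, unfolds into an Ermakov-form formula carrying the term $-d\kappa /dt$. The dynamical expression (48) replaces $i\,d\psi _{n}/dt$ by $H\psi _{n}$, giving $\lambda \left\langle \psi _{n},H\psi _{n}\right\rangle -d\varphi _{n}/dt=\lambda \left\langle \Psi _{n},H\Psi _{n}\right\rangle +(2n+1)\,d\gamma /dt$, which unfolds through \eqref{Psi2A}. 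Taking the inner product of the displayed invariance relation with $\psi _{n}$, using that $H$ is self-adjoint (so $\left\langle \psi _{n},H\psi _{n}\right\rangle $ is real) together with the normalization $\left\langle \psi _{n},\psi _{n}\right\rangle =\lambda ^{-1}$, gives at the abstract level
\[
\lambda \mathop{\rm Re}\left\langle \psi _{n},i\,d\psi _{n}/dt\right\rangle -\lambda \left\langle \psi _{n},H\psi _{n}\right\rangle =c_{n},
\]
and since the common dynamical phase $d\varphi _{n}/dt$ cancels in the difference, the two expressions differ by exactly $c_{n}$. Combining this with the first step, (40) and (48) coincide if and only if $c_{n}=0$, i.e.\ if and only if \eqref{FockState(t)} satisfies \eqref{SchroedingerEqyation(t)}, which is the claim.

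The step I expect to be the main obstacle is checking that the two \emph{explicit} Ermakov-form formulas (40) and (48) really differ by $c_{n}$ and by nothing else. Concretely, one must verify the identity
\[
-\frac{d\kappa }{dt}-\beta ^{-2}\Big(\varepsilon ^{2}+n+\tfrac{1}{2}\Big)\frac{d\alpha }{dt}+\varepsilon \beta ^{-1}\frac{d\delta }{dt}=\lambda \left\langle \Psi _{n},H\Psi _{n}\right\rangle +(2n+1)\frac{d\gamma }{dt},
\]
whose left-hand side is the transport-parallel value (the case $\left\langle \Phi _{n},d\Phi _{n}/dt\right\rangle =0$) of the kinematic expression and whose right-hand side is the dynamical expression; the actual expression (40) exceeds this by precisely $c_{n}$. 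Here the coefficient of $(n+\tfrac{1}{2})$ is seen to vanish using \eqref{SysA} and \eqref{SysC}, while the $n$-independent remainder collapses only upon invoking the full Ermakov system, in particular the global-phase equation \eqref{SysF}. Careful bookkeeping of the real parts and of the normalization factor $\lambda $ is required so that the residual is exactly the real scalar $c_{n}$.
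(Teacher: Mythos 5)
Your argument is correct and matches the paper's (implicit) reasoning: the paper offers no separate proof of this lemma, presenting it only as a reformulation of Lemma~2, and your reduction of the equivalence of (40) and (48) to the vanishing of the single real constant $c_{n}$ from \eqref{InvariantCondition} is exactly the content of that reformulation, with the identity you flag as the ``main obstacle'' being precisely what the proof of Lemma~2 verifies via \eqref{TransPsy}, \eqref{Psi2A} and \eqref{LHSErmakov}. One small imprecision: the Hamiltonian \eqref{Hpx} is not self-adjoint (it carries the anti-Hermitian scalar $\tfrac{i}{2}(c-2d)$ responsible for $\langle \psi _{n},\psi _{n}\rangle =\lambda ^{-1}$), so $\langle \psi _{n},H\psi _{n}\rangle $ is not real; but since both Berry-phase expressions involve only real parts, this does not affect your conclusion.
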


\begin{remark}
In \cite{SanSusVin} it is shown that the two expressions
(40) and (48) are equivalent in coordinate representation,
when a particular form of coordinate and momentum are taken
together with the $L^2$ inner product. The point of the above
lemma is that this happens in an abstract Hilbert space
if and only if the Fock states
satisfy the Schr\"odinger equation~\eqref{SchroedingerEqyation(t)}.
\end{remark}

\section{Expectation Values and Variances}

By (\ref{x2a})--(\ref{p2a}) and (\ref{annandcratoperactionsQED}), one gets%
\begin{eqnarray}
&&\left\langle \psi _{n}(t),x\psi _{n}(t)\right\rangle =-\frac{\lambda
\varepsilon }{\beta },\qquad \qquad \quad \overline{x}=\frac{\left\langle
x\right\rangle }{\left\langle 1\right\rangle }=\frac{\left\langle \psi
_{n},x\psi _{n}\right\rangle }{\left\langle \psi _{n},\psi _{n}\right\rangle
}=-\frac{\varepsilon ( t) }{\beta ( t) },
\label{x2aQED} \\
&&\left\langle \psi _{n}(t),p\psi _{n}(t)\right\rangle =\lambda \left(
\delta -\frac{2\alpha \varepsilon }{\beta }\right) ,\quad \overline{p}=\frac{%
\left\langle p\right\rangle }{\left\langle 1\right\rangle }=\frac{%
\left\langle \psi _{n},p\psi _{n}\right\rangle }{\left\langle \psi _{n},\psi
_{n}\right\rangle }=\delta ( t) -\frac{2\alpha ( t)
\varepsilon ( t) }{\beta ( t) }.  \label{p2aQED}
\end{eqnarray}%
(The relation to the Ehrenfest theorem is discussed in \cite{Lan:Lop:Sus}.)

The standard deviations,%
\begin{equation}
\left( \delta p\right) ^{2}=\frac{\left\langle \left( \Delta p\right)
^{2}\right\rangle }{\left\langle 1\right\rangle }=\overline{\left(
p^{2}\right) }-\left( \overline{p}\right) ^{2},\qquad \left( \delta x\right)
^{2}=\frac{\left\langle \left( \Delta x\right) ^{2}\right\rangle }{%
\left\langle 1\right\rangle }=\overline{\left( x^{2}\right) }-\left(
\overline{x}\right) ^{2},  \label{bdp}
\end{equation}%
are given by%
\begin{equation}
\left( \delta p\right) ^{2}=\left( n+\frac{1}{2}\right) \left( \beta ^{2}+%
\frac{4\alpha ^{2}}{\beta ^{2}}\right) ,\qquad \left( \delta x\right)
^{2}=\left( n+\frac{1}{2}\right) \beta ^{-2}  \label{Varpx}
\end{equation}%
in terms of solutions of the Ermakov-type system. In particular,%
\begin{equation}
\left( \delta p\right) ^{2}\left( \delta x\right) ^{2}=\left( n+\frac{1}{2}%
\right) ^{2}\left( 1+\frac{4\alpha ^{2}}{\beta ^{4}}\right) \geq \frac{1}{4}
\label{HeisenbergUR}
\end{equation}%
as required by the fundamental Heisenberg uncertainty relation \cite%
{Bialynicka-Birula87}, \cite{Cor-Sot:Sua:SusInv}, \cite{HeisenbergQM}. The
minimum-uncertainty squeezed states occur for $n=0$ if $\alpha (
t_{\min }) =0.$

\section{Electromagnetic-Field Quantization in Varying Media}

In the macroscopic approach,
one can present the (noncommuting) vector field operators of the electric
displacement $\boldsymbol{D}(\boldsymbol{r},t)$ and the magnetic induction $%
\boldsymbol{B}(\boldsymbol{r},t),$ which fully describe the properties of
the quantized electromagnetic radiation inside a cavity filled with linear
nonstationary dielectric material (with factorized electric
permittivity and magnetic permeability tensors \cite{Dod:Klim:Nik93}), by the
eigenfunction expansions (cf.\ \cite{Bialynicka-Birula87}, \cite{DutraQED},
\cite{HeisenbergQM}, \cite{JaynesCummings63}, \cite{Pedrosa11Conf}, \cite%
{Schleich01})%
\begin{align}
\boldsymbol{D}(\boldsymbol{r},t)& =\sum_{\upsilon }\varpi _{\upsilon }(
t) p_{\upsilon }(t)\boldsymbol{D}_{\upsilon }(\boldsymbol{r}),  \notag
\\
\boldsymbol{B}(\boldsymbol{r},t)& =\sum_{\upsilon }\omega _{\upsilon }(
t) q_{\upsilon }(t)\boldsymbol{B}_{\upsilon }(\boldsymbol{r}),
\label{CavityExpansionsEH}
\end{align}%
in the Heisenberg picture when the time evolution is introduced through the
equations (\ref{pqQED})--(\ref{aacross(t)QED}), which provides a more direct
analogy between quantum and classical physics \cite{Bialynicka-Birula87},
\cite{HarocheRaimond06}. For
a discussion of properties of the stationary orthonormal eigenfunctions $%
\boldsymbol{D}_{\upsilon }(\boldsymbol{r})$ and $\boldsymbol{B}_{\upsilon }(%
\boldsymbol{r})$ defined by the geometry of the cavity and given boundary
conditions, see \cite{BeiLiu11}, \cite{Bialynicka-Birula87}, \cite{Choi04},
\cite{Choi10}, \cite{DodDod05}, \cite{Dod:Klim:Nik93}, \cite{DutraQED},
\cite{Glauber91},
\cite{Knolletal87}, \cite{LobMos91a}, \cite{Manetal10}, \cite{Pedrosa11},
\cite{Pedrosa11Conf}, and \cite{ResEb09}.

In view of the phenomenological Maxwell equations, the single
electromagnetic radiation mode $\upsilon $ in a cavity resonator is
analogous to a parametric (driven) harmonic oscillator \cite{Choi10}, \cite{Dod:Klim:Nik93},
\cite{Fey:Hib}, \cite{Manetal10}, \cite{Pedrosa11}, \cite{Pedrosa11Conf}. This
analogy between classical mechanics and electrodynamics allows one to determine
functions $\omega _{\upsilon }( t) $
and $\varpi _{\upsilon }( t) $ from the electric permittivity,
magnetic permeability, and conductivity of the (slowly-)varying medium in
connection with the quadratic Hamiltonian (\ref{Hamiltonian})
(see Appendix~A for more details).
After
quantization of the field Hamiltonian, the time-dependent operators $%
p_{\upsilon }(t)$ and $q_{\upsilon }(t)$ are determined by
Theorem~1, and the corresponding Fock states are constructed in Lemma~2.
(The average fields obey the classical Maxwell equations \cite%
{Bialynicka-Birula87}.) Methods of stochastic calculus \cite{Oksendal00} can
be used in the case of randomly varying media.

In Schr\"{o}dinger's picture, for the diagonal matrix elements of the field
oscillators, we get%
\begin{align}
\left\langle \boldsymbol{D}(\boldsymbol{r},t)\right\rangle & =\boldsymbol{D}%
_{\upsilon }(\boldsymbol{r})\varpi _{\upsilon }( t) \left\langle
\psi _{n}(t),p\psi _{n}(t)\right\rangle ,  \notag \\
\left\langle \boldsymbol{B}(\boldsymbol{r},t)\right\rangle & =\boldsymbol{B}%
_{\upsilon }(\boldsymbol{r})\omega _{\upsilon }( t) \left\langle
\psi _{n}(t),x\psi _{n}(t)\right\rangle,  \label{EHradiationOperators}
\end{align}%
with (\ref{x2aQED})--(\ref{p2aQED}) for a single mode $\upsilon $, and the
corresponding variances can be obtained with the help of (\ref{Varpx}). In
the autonomous case (see \cite{KrySusVegaMinimum}), the variances are given
(up to a normalization) by equations (A.4)--(A.5) of \cite{LopSusVegaHarm}.

\section{Summary and Applications}

In this letter, the radiation field operators in a cavity with varying
dielectric medium are constructed in terms of explicit solutions of
Heisenberg's equations of motion. The phenomenological quadratic Hamiltonian
under consideration corresponds to the most general (single mode)
one-dimensional mathematical model of quantization in an abstract Hilbert
space. Nonstandard solutions of these equations are obtained
(in terms of quadratures),
and
applications to the radiation field quantization, including randomly varying media,
are briefly discussed.

For most applications in (nonlinear) optics, the electromagnetic field can be
treated classically. But, when quantum limits are approached and one is interested in the photon statistics
of the field, a quantum description is required (see \cite{Drummomd90},
\cite{Glauber91}, \cite{Hillery09}, \cite{Hillery84} and the references therein).
The explicit form of the Bogoliubov transformation (\ref{aacross(t)QED}) (for nonautonomous quadratic systems)
is one of the starting points in this approach \cite{Bialynicka-Birula87}, \cite{Dod:Klim:Nik93},
\cite{Pedrosa11}, \cite{Yuen76}.
The interaction of multiple modes, on the one hand,
and microscopic (lossy) medium models, on the other hand,
are also under consideration \cite{HutterBarnet92},
\cite{Khanetal05}, \cite{Knolletal87}, \cite{Lenac03}, \cite{LobMos91a}, \cite{LobMos91b}, \cite{Raabel07},
\cite{Schuetzholdetal98}, \cite{SuttorpWubs04}.

The problem of quantization of electromagnetic field in material media
remains important in view of recent trends in the flourishing cavity QED
\cite{Dodonov09}, \cite{DodDod05}, \cite{Khanetal05}, \cite{Naylor12}, \cite{Raabel07} and
for experiments in quantum optics in media \cite{Bialynicka-Birula87}, \cite%
{Glauber91}, \cite{GrunerWelsch96}, \cite{Lenac03}, \cite{ResEb09}, \cite{Vasylyevetal10},
which may result in a better understanding of the interaction of light with matter.
Among
other possible applications of the electromagnetic wave propagation in
time-dependent media are the modulation of microwave power \cite%
{Morgenthaler58}, wave propagation in ionized plasma \cite{Kozaki78}, and
magnetoelastic delay lines \cite{RezeMorg69}, \cite{RezeMorg69Exp} (see also
\cite{Choi10} and the references therein).

\medskip

\noindent \textbf{Acknowledgments.\/} We thank Michael Berry for bringing
Ref.~\cite{BerryDennis} to our attention and Victor Dodonov and Luc Vinet for
valuable
discussions. This research was partially supported by an AFOSR grant
FA9550-11-1-0220 and by the Austrian Science Foundation FWF, grants Z130-N13
and S50-N15, the latter in the framework of the Special Research
Program ``Algorithmic and Enumerative Combinatorics".

\appendix

\section*{Appendix: Factorized Media}

\setcounter{equation}{0}%
\global\def\theequation{\mbox{A.\arabic{equation}}}

For the phenomenological Maxwell equations in linear, dispersive,
time-varying media, namely%
\begin{align}
\func{curl}\boldsymbol{E}&=-\frac{1}{c}\frac{\partial \boldsymbol{B}}{%
\partial t},\quad &\func{div}\boldsymbol{D}&=4\pi \rho,\label{Maxwell} \\
\func{curl}\boldsymbol{H}&=\frac{1}{c}\frac{\partial \boldsymbol{D}}{%
\partial t} + \frac{4\pi}{c}\boldsymbol{j},\quad \quad &\func{div}\boldsymbol{B}&=0,
\end{align}%
\begin{equation}
\boldsymbol{D}=\widetilde{\varepsilon }( \boldsymbol{r},t)
\boldsymbol{E},\qquad \boldsymbol{B}=\widetilde{\mu }( \boldsymbol{r}%
,t) \boldsymbol{H},\qquad \boldsymbol{j}=\widetilde{\sigma }( \boldsymbol{r},t)
\boldsymbol{E},
\end{equation}%
the continuity equation,
\begin{equation}
\frac{\partial \rho}{\partial t} + \func{div}\boldsymbol{j} =
\frac{\partial \rho}{\partial t} + \frac{4\pi \widetilde{\sigma }}{\widetilde{\varepsilon }} \rho
+ \boldsymbol{D} \cdot \func{grad} \left( \frac{ \widetilde{\sigma }}{\widetilde{\varepsilon }} \right) = 0 ,
\end{equation}
has the stationary solution $\rho \equiv 0$ under the condition $\func{grad} \left(\widetilde{\sigma }/\widetilde{\varepsilon } \right)=0.$

With the help of the vector $\boldsymbol{A}$ and scalar $\varphi$ potentials,%
\begin{equation}
\boldsymbol{B}=\func{curl}\boldsymbol{A},\qquad \boldsymbol{E}=-\frac{1}{c}%
\frac{\partial \boldsymbol{A}}{\partial t}+ \func{grad}\varphi,
\end{equation}%
the Maxwell equations can be reduced to
the gauge condition
\begin{equation}
\frac{1}{c}\func{div}\left( \widetilde{\varepsilon }\frac{\partial \boldsymbol{A}}{%
\partial t}\right) =\func{div}\left( \widetilde{\varepsilon } \func{grad} \varphi \right)
\end{equation}%
and the single second-order generalized wave equation
\begin{equation}
\func{curl}\left( {\widetilde{\mu }}^{-1} \func{curl}\boldsymbol{A}\right)
+\frac{1}{c^{2}}\frac{\partial }{\partial t}\left( \widetilde{\varepsilon }%
\frac{\partial \boldsymbol{A}}{\partial t}\right)
+ \frac{4\pi \widetilde{\sigma }}{c^2}
\frac{\partial \boldsymbol{A}}{\partial t}  \label{WaveEquation}
 = \frac{1}{c} \frac{\partial }{\partial t}
\left(\widetilde{\varepsilon } \func{grad} \varphi \right)
+ \frac{4\pi \widetilde{\sigma }}{c}\func{grad} \varphi  .
\end{equation}%
Here, we consider
the factorized (real-valued) dielectric
permittivity,
the  magnetic permeability, and
the  conductivity (tensors)%
\begin{equation}
\widetilde{\varepsilon }( \boldsymbol{r},t) =\xi ( t)
\overline{\varepsilon }( \boldsymbol{r}) ,\qquad \widetilde{\mu }%
( \boldsymbol{r},t) =\eta ( t) \overline{\mu }(
\boldsymbol{r}), \qquad \widetilde{\sigma }%
( \boldsymbol{r},t) =\chi ( t) \overline{\sigma }(
\boldsymbol{r}) \label{DKT}
\end{equation}%
(the case $\widetilde{\sigma }\equiv 0$ was discussed in \cite{Dod:Klim:Nik93}).
Under the imposed condition of $\func{grad} \left(\widetilde{\sigma }/\widetilde{\varepsilon } \right)=0,$
one can choose $4\pi\overline{\sigma }=\overline{\varepsilon }$ without loss of generality.

The solution of the classical problem for a given single mode,
$\upsilon $ say,
has the form
$$\boldsymbol{A}( \boldsymbol{r},t) =\boldsymbol{u}%
( \boldsymbol{r}) q( t), \quad \quad
\varphi(\boldsymbol{r},t) =(k/c)\dfrac{dq}{dt} \phi(\boldsymbol{r})$$
($k$ is a constant), and%
\begin{equation}
\boldsymbol{B}=q\func{curl}\boldsymbol{U},\qquad \boldsymbol{D}=-\frac{\xi }{%
c}\frac{dq}{dt}\ \overline{\varepsilon }\boldsymbol{U},
\end{equation}%
provided that%
\begin{alignat}2
\notag
\func{curl}\left( \frac{1}{\overline{\mu }}\func{curl}\boldsymbol{U}%
\right) &=\upsilon ^{2}\, \overline{\varepsilon }\boldsymbol{U},\qquad &%
\boldsymbol{U} &=\boldsymbol{u}-k\func{grad}\phi, \\
\frac{d^{2}q}{dt^{2}}+\frac{\xi ^{\prime }+\chi}{\xi }\frac{dq}{dt}+\frac{%
c^{2}\upsilon ^{2}}{\xi \eta }q&=0, \ &\upsilon &=\text{constant} ,
\label{EigenValueProblem}
\end{alignat}%
and certain required boundary conditions are satisfied on the boundary of the cavity
(see \cite{Dod:Klim:Nik93}, \cite{DutraQED}, \cite{Glauber91}, \cite{Knolletal87}, \cite{Schleich01} for more details).

As a result, we can choose $c=d=f=g=0$ and
\begin{equation}
a=\frac{1}{2\xi} e^{-\int(\chi/\xi)\, dt}, \qquad b=\frac{c^{2}\upsilon ^{2}}{2\eta}
e^{\int(\chi/\xi)\, dt}
\end{equation}
in Theorem~1 for the quantization of the mode of the electromagnetic field under consideration.
(See also \cite{BeiLiu11}, \cite{Choi10}, \cite{Manetal10}, \cite{Pedrosa11}, and \cite{Pedrosa11Conf}.)

\end{document}